\newtheorem{prop}{Proposition}
 \definecolor{darkgreen}{rgb}{0.47,0.67,0.19}
\newcommand{\wjy}[1]{{\color{black}#1}}
\newcommand{\ab}[0]{\color{black}}
\begin{document}

\title{Fidelity and Entanglement of Random Bipartite Pure States: Insights and Applications}
\author{George Biswas}
% \email{georgebsws@gmail.com}
\affiliation{Department of Physics, Tamkang University, Tamsui Dist., New Taipei 25137, Taiwan, ROC}
\affiliation{Center for Advanced Quantum Computing, Tamkang University, Tamsui Dist., New Taipei 25137, Taiwan, ROC}
%\affiliation{Department of Physics, National Institute of Technology Sikkim, Ravangla, South Sikkim 737 139, India}

\author{Shao-Hua Hu}
\affiliation{Department of Physics, Tamkang University, Tamsui Dist., New Taipei 25137, Taiwan, ROC}
% \affiliation{Center for Advanced Quantum Computing, Tamkang University, 151 Yingzhuan Rd, Tamsui Dist., New Taipei 25137, Taiwan, ROC}

\author{Jun-Yi Wu}
\email{junyiwuphysics@gmail.com}
\affiliation{Department of Physics, Tamkang University, Tamsui Dist., New Taipei 25137, Taiwan, ROC}
\affiliation{Center for Advanced Quantum Computing, Tamkang University, Tamsui Dist., New Taipei 25137, Taiwan, ROC}
\affiliation{Physics Division, National Center for Theoretical Sciences, Taipei 10617, Taiwan, ROC}

\author{Debasish Biswas}
% \email{debasish.abc@gmail.com}}
\affiliation{National Institute of Technology Jamshedpur, Adityapur, Jamshedpur, Jharkhand 831 014, India}

\date{\today}

\author{Anindya Biswas}
 \email{anindya@nitsikkim.ac.in}
\affiliation{Department of Physics, National Institute of Technology Sikkim, Ravangla, South Sikkim 737 139, India}

\begin{abstract}
We investigate the fidelity of Haar random bipartite pure states from a fixed reference quantum state and their bipartite entanglement. By plotting the fidelity and entanglement on perpendicular axes, we observe that the resulting plots exhibit non-uniform distributions. The distribution depends on the entanglement of the fixed reference quantum state used to quantify the fidelity of the random pure bipartite states. We find that the average fidelity of typical random pure bipartite qubits within a narrow entanglement range with respect to a randomly chosen fixed bipartite qubit is $\frac{1}{4}$. Extending our study to higher dimensional bipartite qudits, we find that the average fidelity of typical random pure bipartite qudits with respect to a randomly chosen fixed bipartite qudit remains constant within a narrow entanglement range. The values of these constants are \(\frac{1}{d^2}\), with d being the dimension of the local Hilbert space of the bipartite qudit system, suggesting a consistent relationship between entanglement and fidelity across different dimensions.
The probability distribution functions of fidelity with respect to a product state are analytically studied and used as a reference for the benchmarking of distributed quantum computing devices.

\end{abstract}

\maketitle

\section{Introduction}
\label{intro}

%Our 
The following study focuses on examining a collective property of typical random pure quantum states. When we use the term "typical," we refer to random quantum states that are uniformly distributed across the surface of a unit hypersphere, with the computational basis vectors serving as the axes. The set of typical random pure states is called the Haar uniformly distributed set of random pure states in the literature~\cite{Press92numericalrecipes,bengtsson_zyczkowski_2006,cohn2013measure,Dahlsten_2014,MISZCZAK2012118,doi:10.1063/1.3595693,e20100745,PhysRevA.93.032125,cohn2013measure,Zyczkowski_2001}. 
To express a general {bipartite qudit pure state} mathematically, we represent it as a sum of basis vectors, as shown in Eq.~\ref{general_pure_state}.
\begin{equation} \label{general_pure_state}
    |{\psi}\rangle=\frac{1}{C}\sum_{k=1}^{d^2}(a_{k}+ib_{k})|k\rangle
\end{equation}
In this equation, the coefficients \(a_{k}\) and \(b_{k}\) represent the real and imaginary parts, respectively, of the complex coefficient of the basis vector \(|k\rangle\), and \(C = \sqrt{\sum_{k=1}^{d^2}(a_k^2 + b_k^2)}\) is the normalization constant. We generate the random pure states Haar uniformly. 

Haar uniformity is a property that can be achieved by selecting the coefficients \(a_{k}\) and \(b_{k}\) independently from Gaussian distributions with vanishing mean and finite variance. Several references, including \cite{Press92numericalrecipes,bengtsson_zyczkowski_2006,cohn2013measure,Dahlsten_2014,MISZCZAK2012118,doi:10.1063/1.3595693,e20100745,PhysRevA.93.032125,cohn2013measure,Zyczkowski_2001,Biswas_2021,biswas2022spread}, discuss the utilization of Haar uniformity in different contexts. The probability density function of Gaussian distribution is given by Eq.~\ref{gaussian_int}, where \(\mu_G\) represents the mean and \(\sigma_G\) represents the standard deviation of the distribution.
\begin{equation}
\label{gaussian_int}
f_G(x)=\frac{1}{\sigma_G\sqrt{2\pi}}e^{-\frac{1}{2}\left(\frac{x-\mu_G}{\sigma_G}\right)^2}
\end{equation}
It is worth mentioning that when the individual variables are independently distributed according to Gaussian distributions, the joint probability distribution of these variables is spherically symmetric.
Note that the states are not initially distributed uniformly over the hypersphere's surface. Instead, they are distributed in the hyperspace according to Gaussian distributions. However, after the normalization process, these states become uniformly distributed over the hypersphere's surface with a unit radius in the same hyperspace~\cite{Biswas_2021}. 

In this study, we %quantify 
calculate the fidelity of typical random bipartite pure quantum states with respect to a fixed reference pure quantum state. We also calculate the %level of 
bipartite entanglement %exhibited by 
of these random states.

Fidelity is a fundamental concept in quantum information theory and is commonly used to compare the similarity of quantum states, assess the performance of quantum operations or channels, and quantify the accuracy of quantum state preparation and measurement~\cite{Yuan_2017,PhysRevLett.73.3047,Winter_2001}. The fidelity between two pure quantum states \(|\psi\rangle\) and \(|\phi\rangle\) quantifies the closeness of the two states and is defined as the absolute value square of their inner product~\cite{Yuan_2017,PhysRevLett.73.3047,LI2015158,doi:10.1080/09500349414552171,Winter_2001}. 
Mathematically, the fidelity \(F\) of two pure states is given by: 
\begin{equation}
    F(|\psi\rangle,|\phi\rangle)=|\langle\psi|\phi\rangle|^2.
\end{equation}
%The fidelity reaches its maximum value of 1 when the two states are identical, indicating perfect overlap. Conversely, a fidelity value 0 implies complete orthogonality between the states, indicating no overlap~\cite{Yuan_2017,PhysRevLett.73.3047,LI2015158,doi:10.1080/09500349414552171,Winter_2001}.

To quantify the entanglement of typical random bipartite pure quantum states, we have used entanglement entropy \(E\). %Entanglement entropy is a measure used to quantify bipartite entanglement of a bipartite pure state. 
For a bipartite pure state $|{\psi_{ab}}\rangle$ of a composite system $a$ and $b$, the entanglement entropy is defined as the von Neumann entropy of the reduced density matrix $\rho_a = Tr_b(|\psi_{ab}\rangle\langle\psi_{ab}|)$: 
\begin{equation}
    E(|\psi_{ab}\rangle)=-Tr(\rho_a\log\rho_a).
\end{equation}
The logarithm in the equation is taken to the base 2 for bipartite qubits and taken to the base ${d}$ for bipartite ${d}$-dimensional qudits~\cite{Janzing2009}. %\wjy{\sout{Note that the total Hilbert space dimension for bipartite qubits is \(2^2=4\) and similarly the total Hilbert space dimension of bipartite \(\widetilde{d}\)-dimensional qudits \({\widetilde{d}}^{2}\)}}.

%Existing research has shown that when considering Haar uniformly distributed random pure states in a d-dimensional Hilbert space, their 
{\ab It is known that the average mutual fidelity of Haar uniformly distributed random pure states in a \(d\)-dimensional Hilbert space converges to $\frac{1}{d}$}~\cite{PhysRevA.71.032313}. This convergence also holds when we focus on a single reference state and compute the average fidelity between this fixed state and all other states randomly distributed according to the Haar measure. 
In this paper, we look into the insights of the fidelity distribution and address the question, "What is the average fidelity of a subset of Haar uniformly distributed random bipartite qubits/qudits, defined by a fixed entanglement entropy value, with respect to a fixed reference bipartite qubit/qudit?"
% \textcolor{black}{Note that our primary objective is to investigate the probability distributions of fidelity and their average, conditioned by the entanglement entropy value of Haar uniformly distributed bipartite pure states. The analysis of probability distributions of the entanglement of Haar uniform pure states has been thoroughly explored in previous literature~\cite{Hayden2006,Fukuda2014,Hayashi2017,Biswas_2021}.}
\wjy{Note that the analysis of probability distributions of the entanglement of Haar uniform pure states has been thoroughly explored in previous literature~\cite{Hayden2006,Fukuda2014,Hayashi2017,Biswas_2021}.  Our primary objective of this paper is to investigate the probability distributions of fidelity and their average, conditioned by the entanglement entropy value of Haar uniformly distributed bipartite pure states. }

\section{Average fidelity and entanglement of typical bipartite qubits/qudits}
\label{fetbq}

This section is dedicated to quantifying the fidelity between a fixed pure bipartite qubit/qudit and typical random bipartite pure qubits/qudits. Furthermore, we calculate  
 the average fidelity \((\textnormal{F}_\textnormal{avg})\) of these typical random pure qubits/qudits with respect to the fixed reference bipartite qubit/qudit by sorting them according to their %level 
 {\ab degree} of bipartite entanglement.

\begin{figure}[!ht]
\centering
   \includegraphics[width=\linewidth]{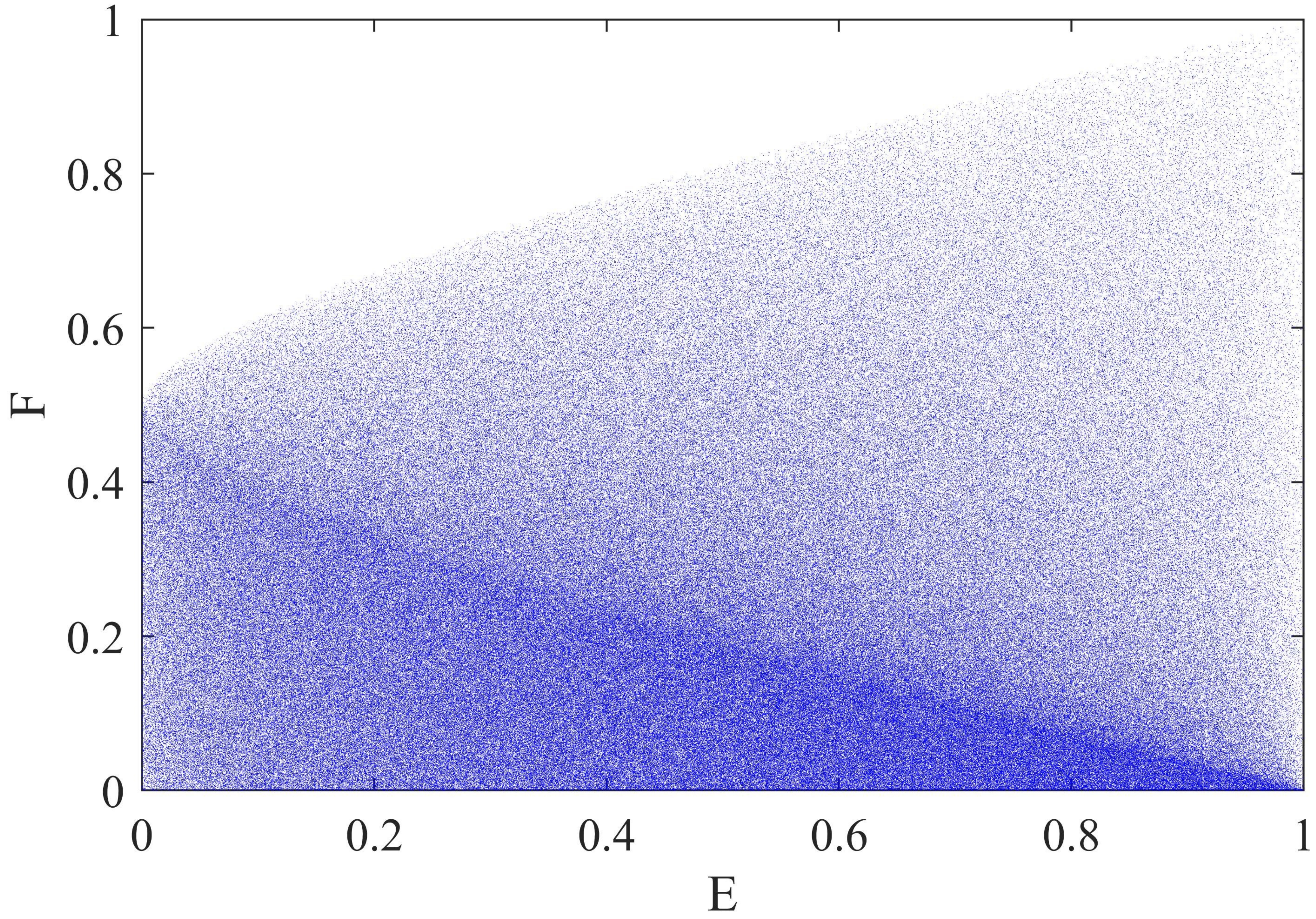}
  \caption{Scatter plot of the entanglement entropy and the fidelity of Haar uniform random {two-qubit pure states} with respect to a fixed {two-qubit maximally entangled state}. The entanglement entropy of the random {two-qubit states} is plotted along the horizontal axis and fidelity between the fixed state and the random states is plotted along the vertical axis.}
\label{scatter_p_bell}
\end{figure}

\begin{figure}[!ht]
\centering
   \includegraphics[width=\linewidth]{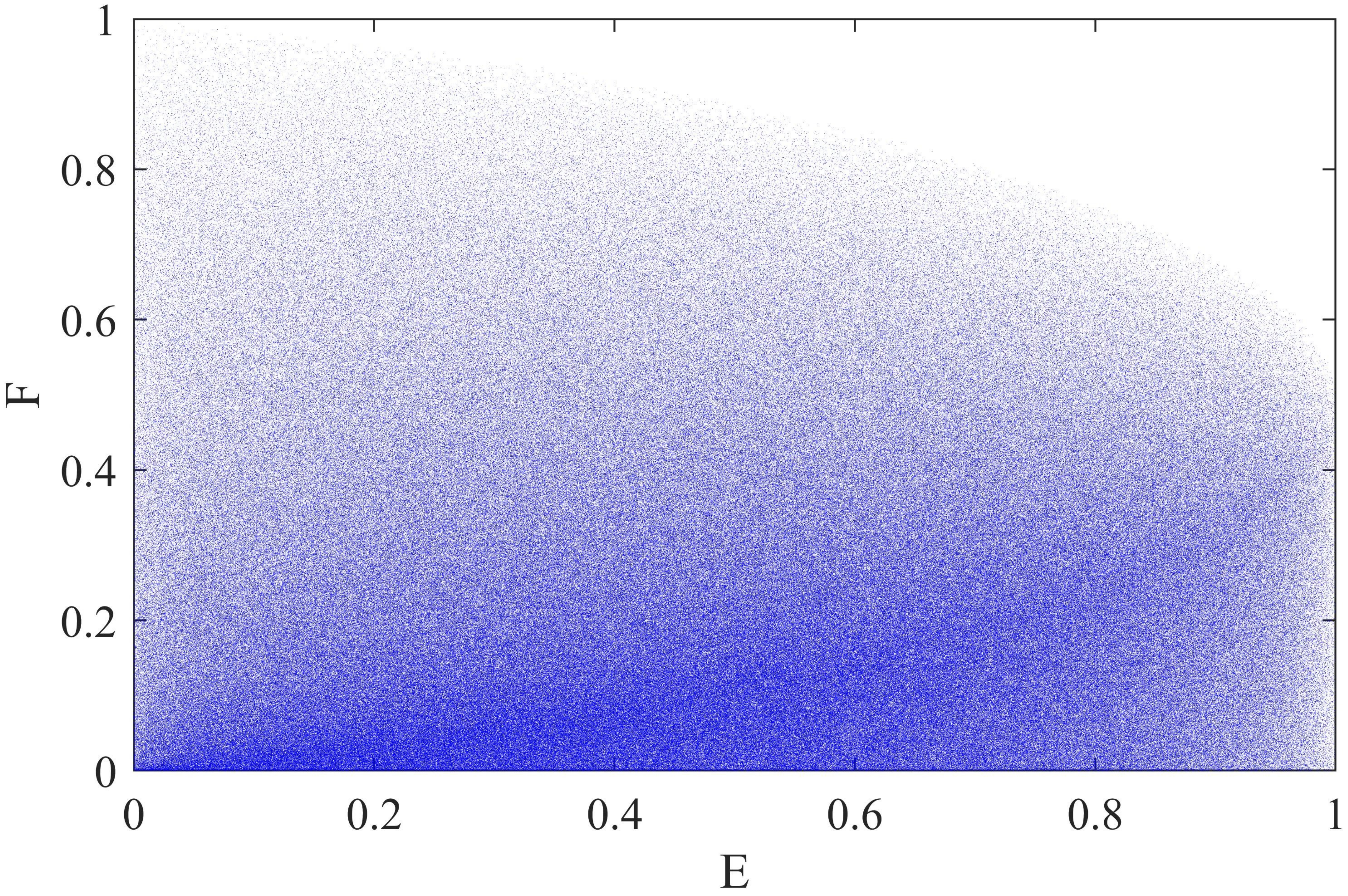}
  \caption{Scatter plot of the entanglement entropy and the fidelity of Haar uniform random {two-qubit pure states} with respect to a fixed {two-qubit separable pure state}. The entanglement entropy of the random states is plotted along the horizontal axis and the fidelity between the fixed state and the random states is plotted along the vertical axis.}
\label{scatter_p_separable}
\end{figure}

%The fidelity distribution depends on the entanglement of the chosen reference quantum state. 
Fig.~\ref{scatter_p_bell} displays a scatter plot of fidelity (F) against entanglement entropy (E) when a maximally entangled state is selected as the fixed reference bipartite pure qubit, while Fig.~\ref{scatter_p_separable} shows the same plot when a separable bipartite pure qubit is chosen as the fixed reference quantum state.
\textcolor{black}{Reference~\cite{PhysRevA.71.032313} established a probability distribution for the fidelity of d-dimensional Haar uniform random pure states as ${P}(f) = (d-1)(1-f)^{d-2}$, irrespective of the fixed reference quantum state. In this paper, we examine the detailed fidelity distribution in relation to the entanglement of bipartite pure states, we see from}
%\sout{As shown in} 
Fig.~\ref{scatter_p_bell} and Fig.~\ref{scatter_p_separable}, \textcolor{black}{that} the fidelity distributions of Haar uniformly distributed bipartite qubits vary based on the entanglement of the fixed reference qubits used for quantifying fidelity. 
It is important to note that if we selectively pick Haar uniform pure bipartite qubits based on their entanglement entropy (equivalent to selecting specific vertical lines in Fig.\ref{scatter_p_bell}), these subsets exhibit different fidelity distributions. Additionally, the fidelity distribution of a subset selected by a entanglement entropy value will also differ when the fixed reference state used for quantifying fidelity %is of 
{\ab has a} different {\ab degree of bipartite} entanglement. In other words, two narrow vertical lines at the same position on the horizontal axis in Fig.\ref{scatter_p_bell} and Fig.~\ref{scatter_p_separable} will have distinct fidelity distributions.

Moreover, from a mathematical perspective, it becomes evident that the fidelity distribution of these subsets of Haar random pure bipartite qubits remains consistent even when different fixed reference qubits with the same entanglement are employed. This principle extends to higher-dimensional bipartite qudits {\ab as well}.
The fidelity distribution for Haar uniformly distributed random {two-qudit pure states} (represented as \(P_F\)),  with a specific entanglement entropy value \(\epsilon\) %concerning 
{\ab with reference to} a fixed reference bipartite qudit \(|\phi_{ab}\rangle\)  is described by Eq.~\ref{qubit_fdy_delta}.
% \begin{align}\label{qubit_fdy_delta}
% P_F(f,|\phi_{ab}\rangle,\epsilon) &= \int_{|\psi_{ab}\rangle} \dd |\psi_{ab}\rangle \ \delta(\abs{\langle\phi_{ab}|\psi_{ab}\rangle}^2-f)\notag \\
% &\hspace{0.5cm}\delta(E(|\psi_{ab}\rangle)-\epsilon),
% \end{align}
\begin{align}\label{qubit_fdy_delta}
%P_F(f,|\phi_{ab}\rangle,\epsilon) &=
P_F(f,\epsilon) &=\int_{\text{Haar}} \dd \psi_{ab} \ \delta(\abs{\langle\phi_{ab}|\psi_{ab}\rangle}^2-f)\notag \\
&\hspace{0.5cm}\delta(E(|\psi_{ab}\rangle)-\epsilon),
\end{align}
%where {\ab \(f\) is a fidelity value between 0 and 1,} \(|\psi_{ab}\rangle\) is one of the Haar uniformly distributed random {two-qudit states} and the post-selection of the states corresponding to a particular entanglement entropy value \(\epsilon\) is represented by the Dirac delta term \(\delta(E(|\psi_{ab}\rangle)-\epsilon)\).
\textcolor{black}{where the fidelity $f$ is treated as a random variable and the delta functions act to pinpoint specific bipartite states $|\psi_{ab}\rangle$ from the Haar random distribution based on their fidelity $f$ relative to $|\phi_{ab}\rangle$ and the desired entanglement level $\epsilon$ during integration over the Haar measure~\cite{Santanu_2008}. Note that the Haar measure is normalized such that $\int_{\text{Haar}}d\psi_{ab}=1$.}
Now, let us insert the identity operator as \(I=(U_a\otimes U_b)^{\dagger}(U_a\otimes U_b)\) in Eq.~\ref{qubit_fdy_delta}, where \((U_a\otimes U_b)\) is an arbitrary local unitary operator. Then we have,
% \begin{align}
% P_F(f,|\phi_{ab}\rangle,\epsilon) &= \int_{|\psi_{ab}\rangle} \dd |\psi_{ab}\rangle \ \notag \\
% &\hspace{0.5cm}\delta(\abs{\langle\phi_{ab}|(U_a\otimes U_b)^{\dagger}(U_a\otimes U_b)|\psi_{ab}\rangle}^2\notag \\
% &\hspace{0.5cm}-f)\ \delta(E(|\psi_{ab}\rangle)-\epsilon)
% \notag\\
% &= \int_{|\psi_{ab}\rangle} \dd |\psi_{ab}\rangle \ \notag \\
% &\hspace{0.5cm} \delta(\abs{\langle\phi_{ab}|(U_a\otimes U_b)^{\dagger}|\psi'_{ab}\rangle}^2-f)\ \notag \\
% &\hspace{0.5cm} \delta(E(|\psi'_{ab}\rangle)-\epsilon), 
% \notag
% \end{align}
% where, \(|\psi'_{ab}\rangle=(U_a\otimes U_b)|\psi'_{ab}\rangle\). Note that \(E(|\psi'_{ab}\rangle) = E((U_a\otimes U_b)|\psi_{ab}\rangle)=E(|\psi_{ab}\rangle)\) because the local unitary operator does not change the entanglement of a bipartite quantum state. 
\begin{align}
%&P_F(f,|\phi_{ab}\rangle,\epsilon)  \notag \\
&P_F(f,\epsilon)  \notag \\
= & \int_{\text{Haar}} \dd \psi_{ab} \, \delta\left(\abs{\langle\phi_{ab}|(U_a\otimes U_b)^{\dagger}(U_a\otimes U_b)|\psi_{ab}\rangle}^2-f\right)
\notag\\
& \times \delta(E(|\psi_{ab}\rangle)-\epsilon)
\end{align}
Since the local unitary operator $U_a\otimes U_b$ does not change the entanglement of a bipartite quantum state, we may substitute $|\psi_{ab}\rangle$ by $|\psi'_{ab}\rangle=(U_a\otimes U_b)|\psi_{ab}\rangle$, 
\begin{align}
%& P_F(f,|\phi_{ab}\rangle,\epsilon)  \notag \\
& P_F(f,\epsilon)  \notag \\
= & \int_{\text{Haar}} \dd \psi'_{ab} \, 
\delta\left(\abs{\langle\phi_{ab}|(U_a\otimes U_b)^{\dagger}|\psi'_{ab}\rangle}^2-f\right)
\notag\\
& \times \delta(E(|\psi'_{ab}\rangle)-\epsilon) 
\end{align}
%
% Also, since Haar uniform distribution is invariant under a unitary transformation we may take the integration over \(|\psi'_{ab}\rangle\). Therefore, we have,
% \begin{align}\label{qubit_invariant}
% P_F(f,|\phi_{ab}\rangle,\epsilon) &= \int_{|\psi'_{ab}\rangle} \dd |\psi'_{ab}\rangle \ \notag \\
% &\hspace{0.5cm}\delta(\abs{\langle\phi_{ab}|(U_a\otimes U_b)^{\dagger}|\psi'_{ab}\rangle}^2-f)\ \notag \\
% &\hspace{0.5cm} \delta(E(|\psi'_{ab}\rangle)-\epsilon), 
% \notag\\
% &= P_F(f,(U_a\otimes U_b)|\phi_{ab}\rangle,\epsilon)
% \end{align}
% \replace{
% We see from Eq.~\ref{qubit_invariant} that the choice of the fixed reference qudit \(|\phi_{ab}\rangle\) is arbitrary up to local unitary transformations.}
It means that the fidelity distribution of the two-qudit state with an entanglement entropy \(\epsilon\) does not depend on the choice of the reference two-qudit state \(|\phi_{ab}\rangle\) %is arbitrary 
up to local unitary transformations.
\begin{equation}
    %P_F(f,|\phi_{ab}\rangle,\epsilon) =
    %P_F(f,\epsilon) = P_F(f,(U_a\otimes U_b)|\phi_{ab}\rangle,\epsilon).
[P_F(f,\epsilon)]_{{\color{black}|\phi_{ab}\rangle}} = [P_F(f,\epsilon)]_{{\color{black}(U_a\otimes U_b)|\phi_{ab}\rangle}},
    \label{Equal_Under_Local_operation}
\end{equation}
\textcolor{black}{where the subscripts of the fidelity distributions indicate the reference two-qubit states used to quantify fidelity.}

Notably, a subset of Haar uniformly distributed random bipartite qubits, defined by a fixed entanglement entropy value, exhibits an average fidelity remarkably close to \(\frac{1}{4}\), regardless of the specific fixed reference qubit. Expanding our study to higher-dimensional bipartite qudits, we observe that the average fidelity of random pure bipartite qudits remains constant at \(\frac{1}{d^2}\)  within a narrow entanglement range. This suggests a consistent relationship between entanglement and fidelity across different dimensions. Fig.~\ref{avg_fdy_qudits} {\ab is a visual representation of }%visually represents 
these {\ab numerical} observations. Note that we find a scarcity of states with high and low entanglement entropy in higher dimensions during the numerical preparation of Haar uniform random bipartite qudits.

\begin{figure}[!ht]
\centering
   \includegraphics[width=\linewidth]{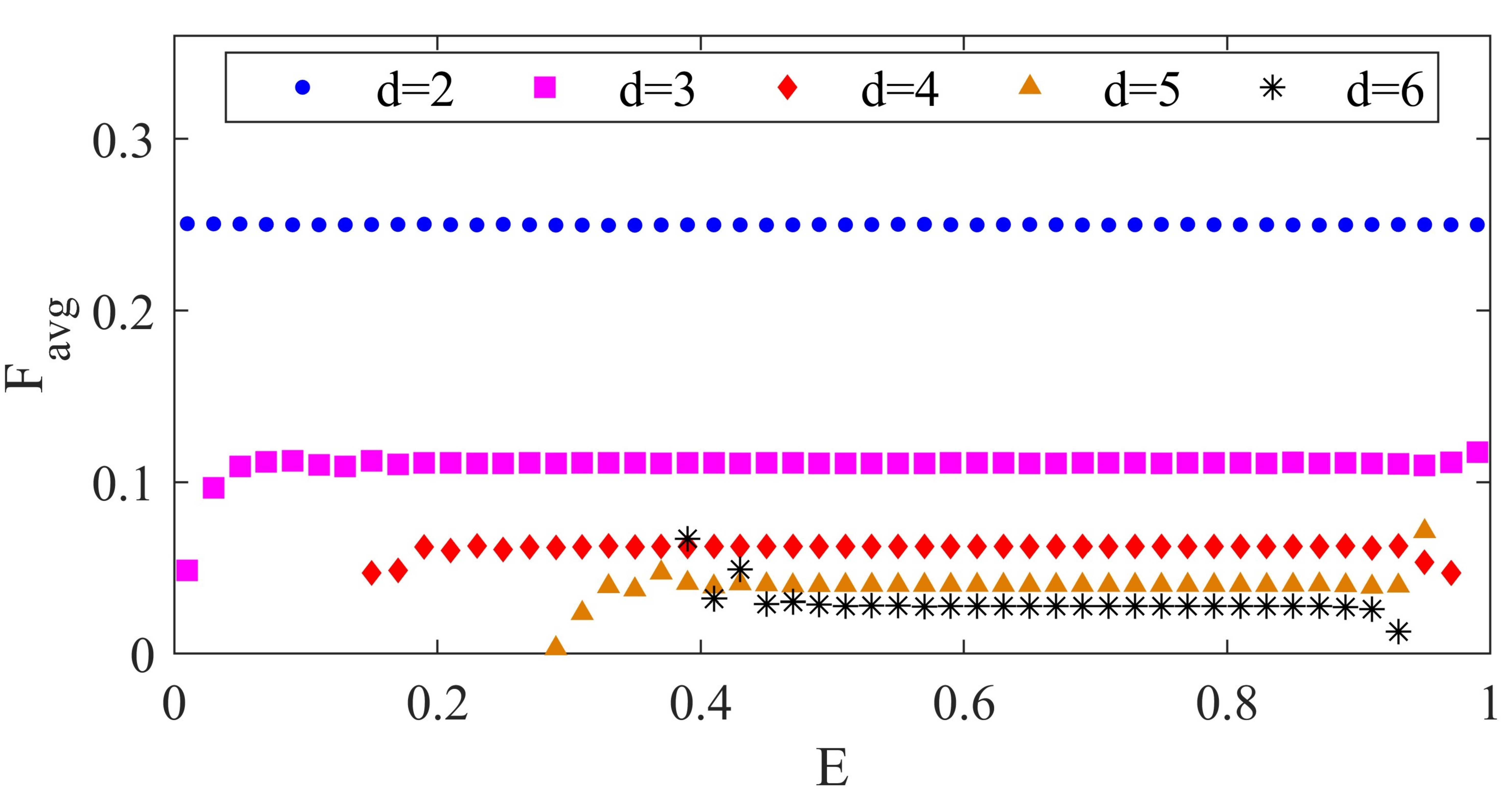}
  \caption{We plot the average fidelity (\({\textnormal{F}_{\textnormal{avg}}}\)) of the subsets of Haar uniformly chosen random {bipartite qudit pure states subject to} a particular value of entanglement entropy (E) with respect to a fixed {reference state for different local dimensionality \({d}\).}
  There are 50 windows of entanglement entropy values along the horizontal axis and the width of the windows is 0.02.}
\label{avg_fdy_qudits}
\end{figure}

Mathematical analysis further supports these findings. We can find the
average fidelity of Haar uniformly distributed bipartite {qudit states} from a fixed reference {state} mathematically. We may write any bipartite pure states by Schmidt decomposition, and entanglement entropy
may be calculated using the square of Schmidt coefficients. We may show
that the average fidelity of bipartite states having the same Schmidt
coefficients with respect to a fixed reference state is
constant at \(\frac{1}{d^2}\) as follows. 
Consider a bipartite qudit state
\begin{equation}
  \left|I_{\boldsymbol{s}} \right\rangle  = \sum_{n = 1}^{d}s_n\left| \left. n_{a} \right\rangle \right.\otimes\left|n_{b} \right\rangle  
\end{equation}
% \(\left|I_{P} \right\rangle  = \sum_{n = 1}^{\wjy{d}}\sqrt{P_{n}}\left| \left. n_{a} \right\rangle \right.\bigotimes\left|n_{b} \right\rangle\)
with a specific set of Schmidt coefficients
\(\boldsymbol{s} = \left\{ s_{1}, s_{2}, \ldots, s_{d} \right\}\), where $\boldsymbol{s}$ is ordered by $s_{1} < s_{2} < \cdots < s_{d}$. %{\ab not clear \wjy{[wjy: the ordering is now clarified]}}
We denote the set of states generated from local Haar random unitaries \(U_a\otimes U_b\) acting on the state \(|I_{\boldsymbol{s}}\rangle\) as
\textcolor{black}{
\begin{equation}
\label{eq::s_coeff_set}
    \mathbb{S}_{\boldsymbol{s}}:=\left\{ U_a\otimes U_b\left| I_{\boldsymbol{s}} \right\rangle
    : U_a, U_b \in \mathbb{U}(d)
    \right\}, 
\end{equation}
where $\mathbb{U}(d)$ is the set of $d$-dimensional unitaries.}
All the states \(|\psi_{\boldsymbol{s}}\rangle\) in \(\mathbb{S}_{\boldsymbol{s}}\) have the same Schmidt coefficients \(\boldsymbol{s}\) and hence the same entanglement entropy 
\(E\left( \left| \left. I_{\boldsymbol{s}} \right\rangle \right. \right) = \sum_{n = 1}^{d} s_{n}^2\, \log s_{n}^2 = \epsilon\).
The sets \(\mathbb{S}_{\boldsymbol{s}}\) of different Schmidt coefficients \(\boldsymbol{s}\) are distinct subsets of the set of Haar random bipartite qudit states. 
\begin{equation}
    \mathbb{S}_{\boldsymbol{s}}\cap\mathbb{S}_{\boldsymbol{s}'}
    =
    \emptyset, \text{for all } \boldsymbol{s}\neq\boldsymbol{s}'
\end{equation}
The union of the distinct subsets \(\mathbb{S}_{\boldsymbol{s}}\) over the Schmidt coefficients that have the same entanglement entropy $\epsilon$ forms the whole set of the target ensemble of bipartite qudit states with the entanglement entropy $\epsilon$.
\begin{equation}
    \mathbb{E}_{\epsilon}:=
    \{|\psi_{ab}\rangle: E(|\psi_{ab}\rangle) = \epsilon  \}
    =
    \bigcup_{\boldsymbol{s}:E(\boldsymbol{s})=\epsilon}\mathbb{S}_{\boldsymbol{s}}
\end{equation}
Suppose we fix a reference state
\begin{equation}
    \left|I_{\boldsymbol{s}'} \right\rangle  = \sum_{m = 1}^{d}s'_{m}\left|m_{a} \right\rangle\bigotimes\left| m_{b} \right\rangle
\end{equation}
with the Schmidt coefficients
\(\boldsymbol{s}' = \left\{ s'_{1},s'_{2} ,\ldots,s'_{d} \right\}\).
The average fidelity $F_{avg,\boldsymbol{s}'}(\mathbb{E}_{\epsilon})$ of the Haar uniformly distributed states in 
\(\mathbb{E}_{\epsilon}\) with respect to the fixed reference state
\(\left| I_{\boldsymbol{s}'} \right\rangle \) is
\textcolor{black}{given by 
\begin{align}
\label{eq::ave_fid_ent_entropy}
    F_{avg,\boldsymbol{s}'}(\mathbb{E}_{\epsilon})
    & =
    \int%_{Haar}
    \dd f \ f \ P_F(f|\epsilon) 
    \notag \\
    & = \int%_{Haar}
    \dd f \ f \ \frac{P_F(f,\epsilon)}{P(\epsilon)}, 
\end{align} 
where, $P_F(f|\epsilon)$ is a conditional probability of fidelity with the condition on entanglement entropy value, $P_F(f,\epsilon)$ is given by Eq.~\ref{qubit_fdy_delta} and $P(\epsilon)=\int_{\text{Haar}} \dd \psi_{ab} \ \delta(E(|\psi_{ab}\rangle)-\epsilon)$. The average fidelity $F_{avg,\boldsymbol{s}'}(\mathbb{E}_{\epsilon})$ is then
\begin{align}
\label{eq::ave_fid_ent_entropy}
    F_{avg,\boldsymbol{s}'}(\mathbb{E}_{\epsilon})
    = 
    \int_{Haar(\mathbb{E}_{\epsilon})} \dd \psi_{ab}\ |\langle I_{\boldsymbol{s'}}|\psi_{ab} \rangle|^2,
\end{align}
where $\int_{Haar(\mathbb{E}_{\epsilon})} \dd \psi_{ab} \,g(\psi_{ab})$ is the integral of a function $g(\psi_{ab})$ over the Haar measure normalized within the set $\mathbb{E}_{\epsilon}$,
\begin{align}
\label{eq::normalized_Haar_measure}
    & \int_{Haar(\mathbb{E}_{\epsilon})} 
    \dd \psi_{ab} \, g(\psi_{ab})
    \notag\\
    := &
    \frac{1}{P(\epsilon)}
    \int_{Haar} \dd \psi_{ab}\        \delta(E(\ket{\psi_{ab}}-\epsilon)) \, g(\psi_{ab}).
\end{align}
Since the set $\mathbb{E}_{\epsilon}$ is a distinct union of $\mathbb{S}_{\boldsymbol{s}}$, the average fidelity $F_{avg,\boldsymbol{s}'}(\mathbb{E}_{\epsilon})$ within the set $\mathbb{E}_{\epsilon}$ is an average of the average fidelity $F_{avg,\boldsymbol{s}'}(\mathbb{S}_{\boldsymbol{s}})$ within the set $\mathbb{S}_{\boldsymbol{s}}$ over all the Schmidt coefficients $\boldsymbol{s}'$ that has an entanglement entropy of $\epsilon$, }
\begin{align}\label{equivalence}
    F_{avg,\boldsymbol{s}'}(\mathbb{E}_{\epsilon})
    =
    \int_{\boldsymbol{s}:E(\boldsymbol{s})=\epsilon}
    \dd \boldsymbol{s} \, p_{\boldsymbol{s}}
    F_{avg,\boldsymbol{s}'}(\mathbb{S}_{\boldsymbol{s}}).
\end{align} 
Here, the probability distribution function $p_{\boldsymbol{s}}$ of the Schmidt coeffiient $\boldsymbol{s}$ within the set of the entanglement entropy $\epsilon$ is given by 
\begin{equation}
    p_{\boldsymbol{s}}
    =
    \frac{
        \int_{Haar} \dd \psi_{ab}\; \delta( \boldsymbol{S}(|\psi_{ab}\rangle) -\boldsymbol{s} )
    }{
        \int_{Haar} \dd \psi_{ab}\; \delta( E(|\psi_{ab}\rangle) -\epsilon )
    },
\end{equation}
where $\boldsymbol{S}(|\psi_{ab}\rangle)$ is the Schmidt coefficient of the state $|\psi_{ab}\rangle$ in the ascending order.
\textcolor{black}{The average fidelity $F_{avg,\boldsymbol{s}'}(\mathbb{S}_{\boldsymbol{s}})$ of the Haar uniformly distributed states in 
$\mathbb{S}_{\boldsymbol{s}}$ with respect to the fixed reference state
\(\left| I_{\boldsymbol{s}'} \right\rangle \) within the set $\mathbb{S}_{\boldsymbol{s}}$  is
given by 
\begin{align}
\label{eq::ave_fid_ent_entropy}
    F_{avg,\boldsymbol{s}'}(\mathbb{S}_{\boldsymbol{s}})
    & =
    \int_{Haar(\mathbb{S}_{\boldsymbol{s}})}
    \dd \psi_{ab}\ |\langle I_{\boldsymbol{s'}}|\psi_{ab} \rangle|^2,
\end{align} 
where $\int_{Haar(\mathbb{S}_{\boldsymbol{s}})}    \dd \psi_{ab} $ is the integral over the normalized Haar measure within the set $\mathbb{S}_{\boldsymbol{s}}$ analogous to Eq. \ref{eq::normalized_Haar_measure}.
According to Eq. \ref{eq::s_coeff_set}, its value is determined by }
\begin{align}
    & F_{avg,\boldsymbol{s}'}(\mathbb{S}_{\boldsymbol{s}})
    \notag \\
    % & =
    % \iint_U  \dd U_a \dd U_b
    % |\langle I_{\boldsymbol{s}'}| U_a\otimes U_b| I_{\boldsymbol{s}}\rangle|^2
    % \notag \\
    = &
    \iint_{\mathbb{U}(d)^{\otimes 2}} \dd U_a \dd U_b
    \;\langle I_{\boldsymbol{s}'}| 
    U_a\otimes U_b
    | I_{\boldsymbol{s}}\rangle
    \langle I_{\boldsymbol{s}}|
    U_a^{\dagger}\otimes U_b^{\dagger}
    | I_{\boldsymbol{s}'}\rangle.
\end{align}
According to Proposition 5 in \cite{10.1063/5.0038838}, a linear map $\Lambda$ passing through a quantum channel of Haar random unitary is a completely depolarizing channel multiplied with the trace of $\Lambda$,
\begin{equation}
    \int_{\mathbb{U}(d)} dU \; U \Lambda U^{\dagger} 
    = \frac{tr(\Lambda)}{d} \mathbb{I}_d.
\end{equation}
The state $\ket{I_{\boldsymbol{s}}}$ passing the Haar random unitary will be then completely depolarized,
\begin{align}
    \int_{U(d)^{\otimes 2}} \dd U_a \dd U_b \;
    U_a\otimes U_b  \ket{I_{\boldsymbol{s}}}\bra{I_{\boldsymbol{s}}}
    U_a^{\dagger}\otimes U_b^{\dagger} 
    = \frac{1}{d^2}\mathbb{I}_{d^2}.
\end{align}
This leads to a constant average fidelity of Haar uniformly distributed random  bipartite pure states with particular Schmidt coefficients relative to a fixed
reference state, 
\begin{equation}
    F_{avg,\boldsymbol{s}'}(\mathbb{S}_{\boldsymbol{s}}) = \frac{1}{d^2}.
\end{equation}
As a result of Eq.~\eqref{equivalence},
%\eqref{eq::ave_fid_ent_entropy}, 
the average fidelity of Haar uniformly distributed random  bipartite pure states with a specific entanglement entropy $\epsilon$ relative to a fixed reference state is also a constant
\begin{equation}
    F_{avg,\boldsymbol{s}'}(\mathbb{E}_{\epsilon}) = \frac{1}{d^2}.
\end{equation}
This mathematical result aligns with our numerical observations, \textcolor{black}{depicted in Fig.~\ref{avg_fdy_qudits}, where $F_{avg,\boldsymbol{s}'}(\mathbb{E}_{\epsilon})\equiv F_{avg}$.}

\section{Probability distribution of fidelity of bipartite states with respect to a pure product state and its application}
\label{PDF}

Previously, we observed that a specific subset of Haar uniformly distributed random bipartite qudits, characterized by a constant entanglement entropy, demonstrates an average fidelity of $\frac{1}{d^2}$ with respect to any given bipartite reference qudit. Despite maintaining uniform average fidelity across various subsets defined by distinct entanglement entropy values, the probability distributions associated with these subsets vary. Moreover, the probability distributions differ when considering different fixed reference qudits of different entanglement entropy.
In the subsequent analysis, we delve into the calculation of probability density functions (PDFs) for fidelity distributions. 

First, we focus on subsets of Haar uniformly distributed random bipartite qubits defined by different entanglement entropy %concerning 
{\ab with respect to} a fixed pure product state as reference. {\ab We find that the probability density functions of the fidelity distributions corresponding to  any entanglement based subsets of Haar uniform states, are functions of  entanglement entropy.} %This exploration unfolds as a function of the entanglement entropy values assigned to different subsets {\ab not clear}. 
In essence, we are calculating the probability density functions for fidelity distributions along different narrow vertical columns of Fig.~\ref{scatter_p_separable}, as a function of their position on the horizontal axis.
In appendix~\ref{apxA}, we derived the closed-form expression for the PDF of fidelity for a subset of Haar random bipartite qubits with respect to a fixed reference pure product state. The subset is defined by fixed Schmidt coefficients $\left(\sqrt{\frac{1+\gamma}{2}},\sqrt{\frac{1-\gamma}{2}}\right)$, with a parameter $\gamma\in\,[0,1\,]$, or a fixed entanglement entropy $E = - \left[\left(\frac{1+\gamma}{2}\right) \log_2{\left(\frac{1+\gamma}{2}\right)}  + \left(\frac{1-\gamma}{2}\right) \log_2{\left(\frac{1-\gamma}{2}\right)}\right]$. \textcolor{black}{We denote the PDF of fidelity with respect to a separable state conditional on the parameter $\gamma$ as $P_{\text{sep}}(f|\gamma)$ and} the obtained relation is given by Eq.~\ref{pdf_pfp}. 
\begin{align}
%F_{\gamma}=
%\cancel{{P}(f|\gamma)}
\textcolor{black}{P_{\text{sep}}(f|\gamma)} =
%P_{\gamma}(f) =
\begin{cases} 
\frac{1}{\gamma}\ln{\frac{1+\gamma}{1-\gamma}}\hspace{0.45cm} \text{for} \ \ 0 \le f\le \frac{1-\gamma}{2}\\
\frac{1}{\gamma}\ln{\frac{1+\gamma}{2f}}\hspace{0.45cm} \text{for} \ \frac{1-\gamma}{2}\le f\le \frac{1+\gamma}{2} \\ 
0\hspace{1.53cm}\text{for} \ \frac{1+\gamma}{2} \le f\le 1 
\end{cases}
\label{pdf_pfp}
\end{align}
Fig.~\ref{F_gamma} depicts the PDF $\textcolor{black}{P_{\text{sep}}(f|\gamma)}$ for various values of $\gamma$.
%It is noteworthy that for various values of $\gamma$, the distributions share a common mean value of $\frac{1}{4}$. However, their standard deviation, as given by Eq.~\ref{std}, and skewness, as given by Eq.~\ref{skw}, exhibit distinct characteristics.
%\begin{align}
%    Std(\gamma) & = \sqrt{\int_0^1{(f-0.25)^2\ \mathscr{P}(f,\gamma)\ \dd f}} 
%     = \frac{\sqrt{4\gamma^2 +3}}{12}
%    \label{std}
%\end{align}
%\begin{align}
%    Skw(\gamma) & = \frac{\int_0^1{(f-0.25)^3\ \mathscr{P}(f,\gamma)\ \dd f}}{\left(Std(\gamma)\right)^3} 
%     = \frac{{18\gamma^2}}{(4\gamma^2 +3)^{\frac{3}{2}}}
%    \label{skw}
%\end{align}
\begin{figure}[!ht]
\centering
   \includegraphics[width=\linewidth]{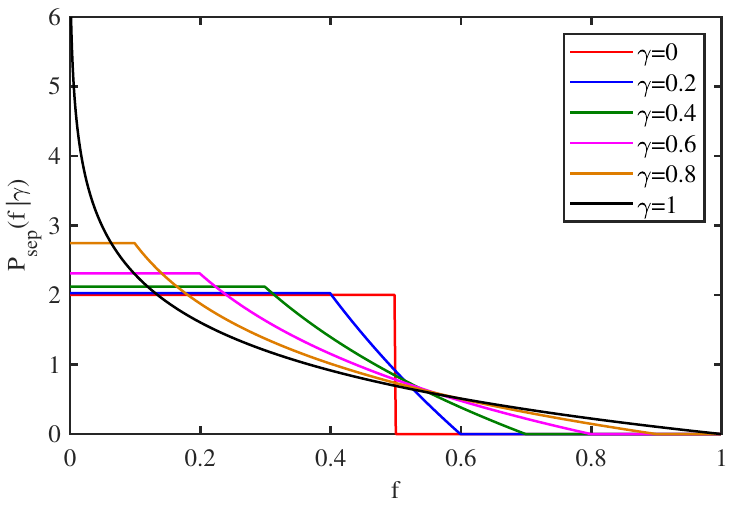}
  \caption{Probability distribution function (PDF) of fidelity for a sub-set of Haar random bipartite qubits with respect to a fixed reference pure product state. The subsets are defined by various entanglement values or various Schmidt coefficient parameters $\gamma$.}
\label{F_gamma}
\end{figure}

Second, We extend our analysis for higher dimensional qudits. In Appendix~\ref{apxA}, we present the derivation of the PDF for the fidelity of Haar uniformly distributed maximally entangled qudits. This distribution is computed with respect to a fixed separable qudit as a reference. 
\textcolor{black}{We denote the PDF of fidelity with respect to a separable state for the maximally entangled subset of Haar uniform states as $P_{\text{sep}}(f|\text{max. ent.})$ and} the PDF for $d$-dimensional qudits is expressed by Eq.~\ref{qudit_pdf}.
\begin{align}
%F_{d}=\mathscr{P}(f|d) = 
%P_{d}(f\textcolor{black}{\cancel{|d}}) = 
%P_{d}(f) 
%\cancel{{P}(f|d)}
&\textcolor{black}{P_{\text{sep}}(f|\text{max. ent.})} 
\notag\\
=&
\begin{cases} 
d(d-1)(1-df)^{d-2}\hspace{0.2cm} \text{for} \ \ 0 \le f\le \frac{1}{d}\\
0\hspace{3.0cm}\text{for} \hspace{0.13cm}\frac{1}{d} \le f\le 1 
\end{cases}
\label{qudit_pdf}
\end{align}
Fig.~\ref{F_d} depicts the PDF $\textcolor{black}{P_{\text{sep}}(f|\text{max. ent.})}$ for various dimension $d$.

\begin{figure}[!ht]
\centering
   \includegraphics[width=\linewidth]{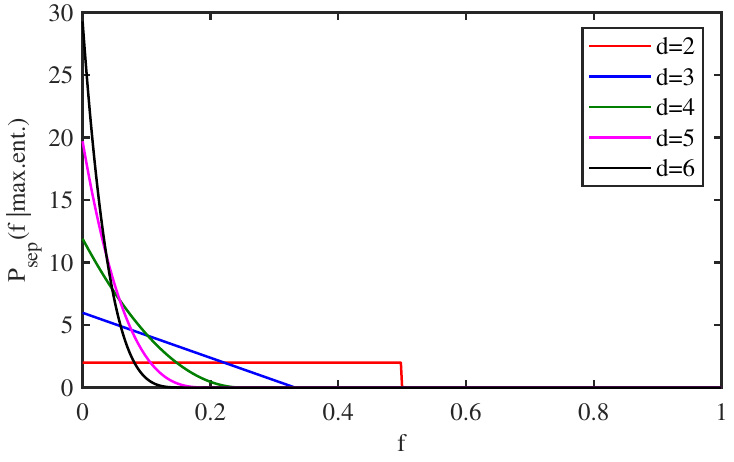}
  \caption{Probability distribution function (PDF) for the fidelity of Haar uniformly distributed maximally entangled bipartite qudits with respect to a fixed separable qudit as a reference for various dimensional bipartite qudits.}
\label{F_d}
\end{figure}

%\subsection{Benchmarking entanglement assisted local operation}
The PDFs $  \textcolor{black}{P_{\text{sep}}(f|\text{max. ent.})}$ and $ \textcolor{black}{P_{\text{sep}}(f|\gamma)}$ for fidelity distributions can serve as references to benchmark entanglement-assisted local operations. Entanglement-assisted local operations, coupled with classical communication, constitutes a crucial component in distributed quantum computation (DQC)~\cite{PhysRevA.62.052317,Wu2023entanglement,https://doi.org/10.1049/iet-qtc.2020.0002,andresmartinez2023distributing}. It is an important device for DQC, illustrated schematically in Fig.~\ref{EALO}.
\begin{figure}[!ht] 
\centering
\includegraphics[width=0.5\linewidth]{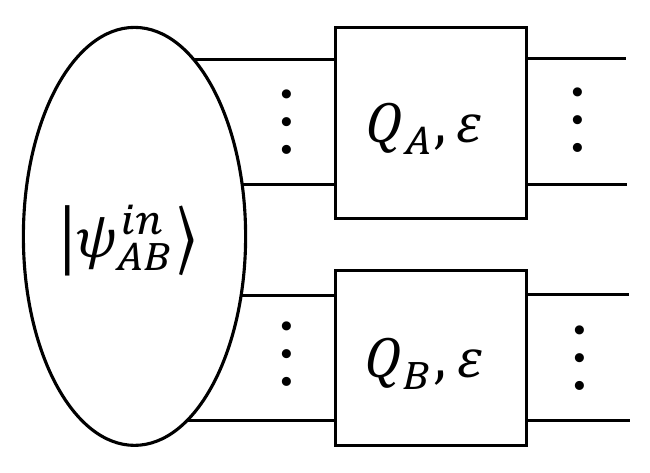}
\caption{%\wjy{[Comment: this figure can be improved with the new version used in QIP poster.]} 
Schematic visualization of the DQC device. Two quantum processing units $Q_A$ and $Q_B$ engage in local unitary operations on {bipartite entangled state} $\ket{\psi_{AB}^{in}}$ to facilitate distributed quantum computation.}
\label{EALO}
\end{figure}
Ideally, the DQC device operates with maximally entangled states as inputs. Eq.~\ref{qudit_pdf} describes the probability distribution of fidelity for maximally entangled input states %concerning 
{\ab with respect to} a fixed pure product state, considering Haar uniform local unitary operations on both parts of the bipartite states.

Subsequently, we consider a depolarizing error model, where both parts of the maximally entangled bipartite states
experience depolarization with an error probability, denoted as $\varepsilon$, during local unitary operations. The fidelity distribution concerning a fixed pure product state of the output from this error-prone DQC device varies with the error probability $\varepsilon$. We numerically computed the error-prone fidelity distribution and compared it with our error-free reference PDF $P_d$ from Eq.~\ref{qudit_pdf}. 
\begin{figure}[!ht]
\centering
   \includegraphics[width=\linewidth]{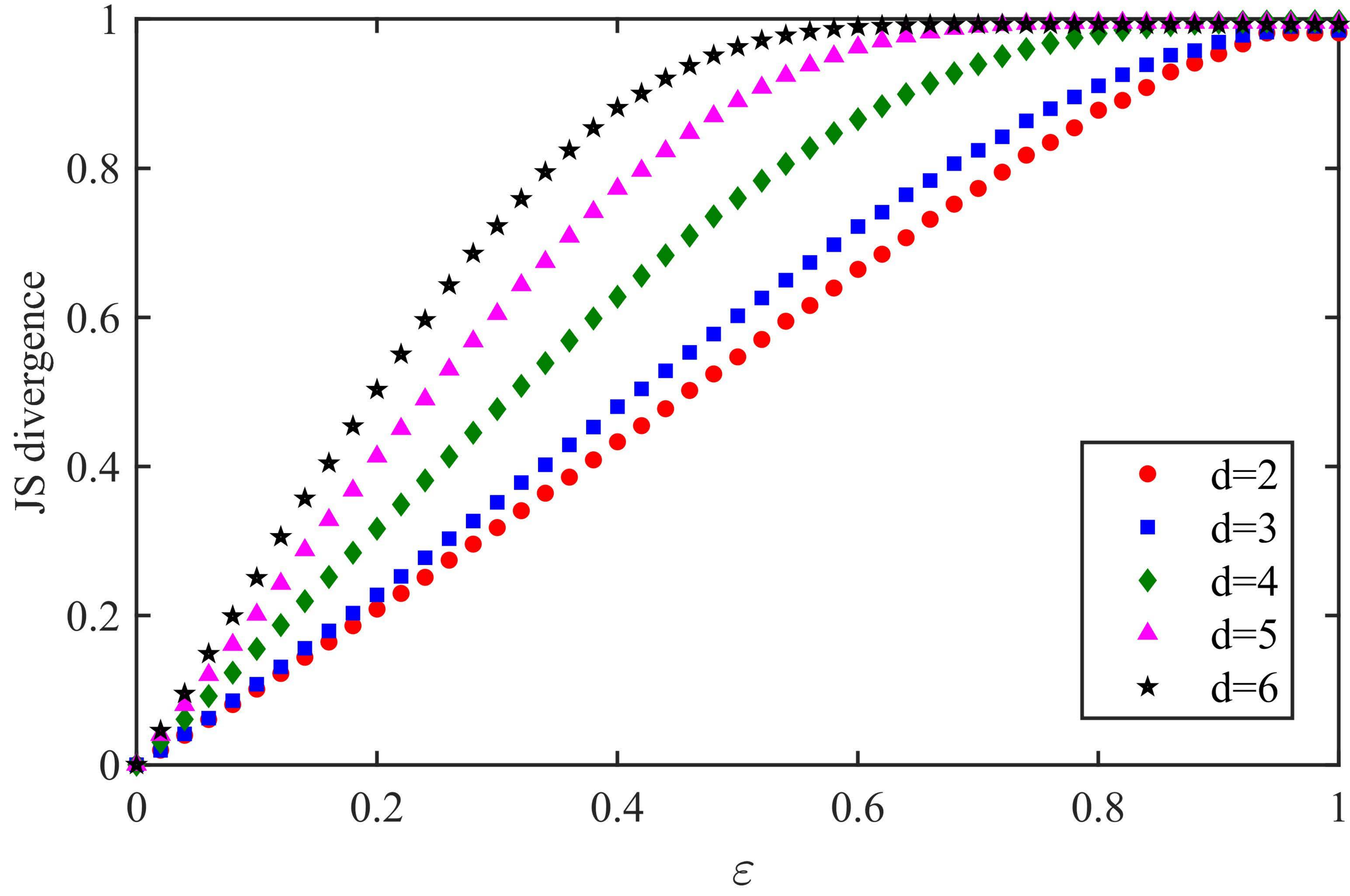}
  \caption{{Jensen-Shannon (JS) divergence} between the error-prone and error-free PDFs as a function of the error probability $\varepsilon$. The analysis is conducted for different dimensional maximally entangled bipartite qudits, showcasing the variation in divergence between the two distributions with changing error rates $\varepsilon$.}
\label{JSD_EALO}
\end{figure}
To quantify the difference between the two probability distributions, we employed the Jensen-Shannon (JS) divergence ~\cite{61115}. The {JS divergence} between probability distributions $P$ and $Q$ is defined as:
\(
D_{JS}(P \,||\, Q) = \frac{1}{2} \left( D_{\text{KL}}(P \,||\, M) + D_{\text{KL}}(Q \,||\, M) \right)
\),
where $M=\frac{P+Q}{2}$ is the midpoint distribution, and $D_{\text{KL}}$ denotes the Kullback-Leibler (KL) divergence.
The KL divergence between probability distributions \( P \) and \( M \) is defined as:
\(
D_{\text{KL}}(P \,||\, M) = \sum_{i} P(i) \log_2\left(\frac{P(i)}{M(i)}\right)
\).
In Fig.~\ref{JSD_EALO}, the {JS divergence} between the error-prone and error-free PDFs of maximally entangled bipartite qudits is depicted with respect to the error probability $\varepsilon$, considering various dimensional bipartite qudits. These divergences serve as effective benchmarks for assessing the error probability $\varepsilon$ associated with the DQC device. We further observed that the DQC device is more sensitive to error in higher dimensions.

In the context of bipartite qubits, we also possess the reference PDFs for partially entangled input states denoted by $  \textcolor{black}{P_{\text{sep}}(f|\gamma)}$ as outlined in Eq.~\ref{pdf_pfp}. In Fig.~\ref{JSD_qubit}, the JS divergence between the error-prone and error-free PDFs of fidelity of bipartite qubits is depicted with respect to the error probability $\varepsilon$, considering input states of various entanglement. We observed that the DQC device is more sensitive to error for input states of higher entanglement.

%In appendix~\ref{apxA} we derive analytically a depolarizing error model with error probability $\varepsilon=1-r$ and with partially entangled qubits as inputs, expressed by Eq.~\ref{pdf_ed}. Fig.~\ref{JSD_qubit} illustrate the JS divergence between the PDFs of Eq.~\ref{pdf_ed} and~\ref{pdf_pfp} for various entanglement entropy or corresponding Schmidt coefficient parameter $\gamma$. \replace{These divergences can be employed as a randomized benchmarking method for a source of partially entangled bipartite qubits.}{}

\begin{figure}[!ht]
\centering
   \includegraphics[width=\linewidth]{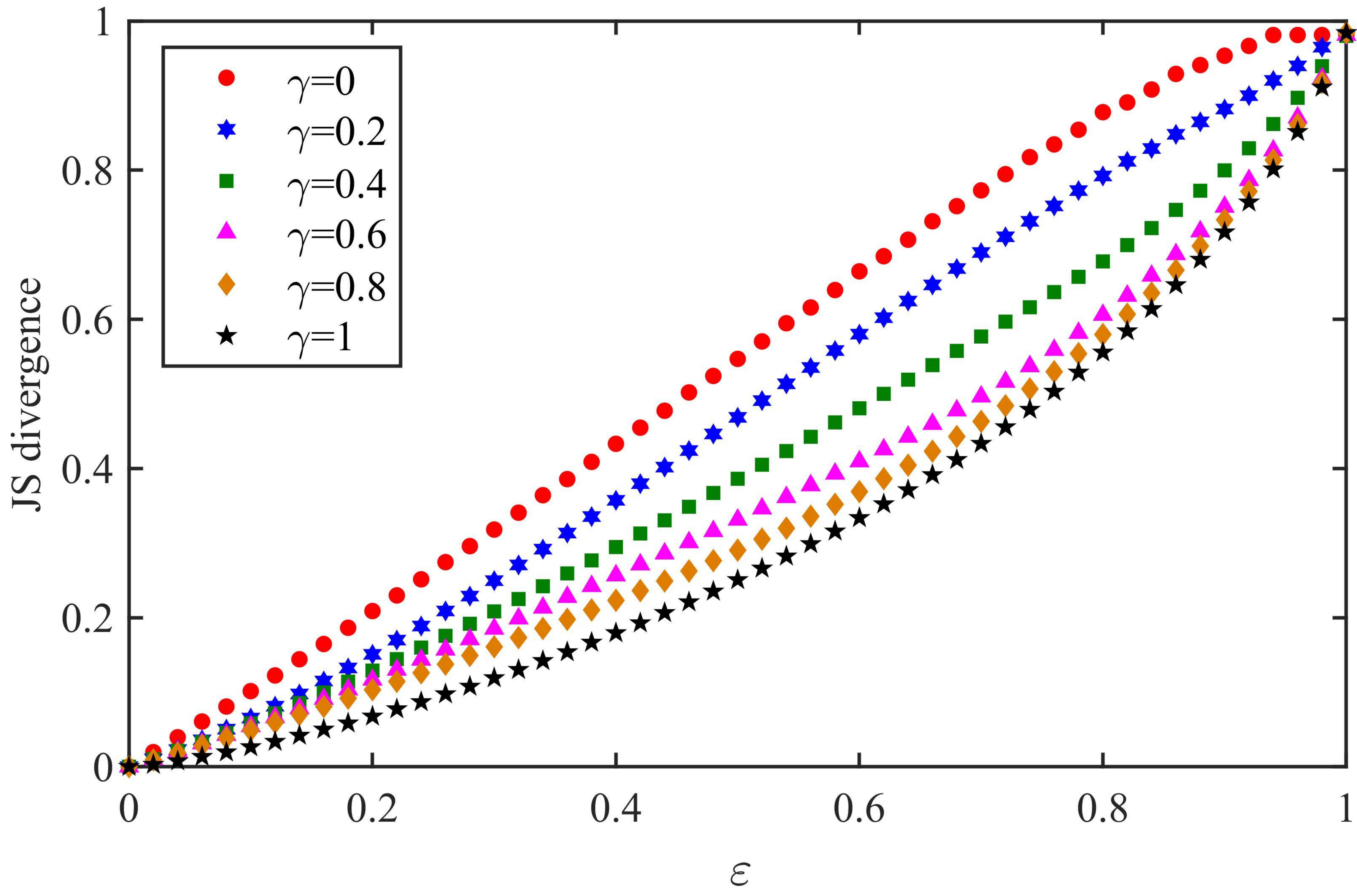}%JSD_epep
  \caption{Jensen-Shannon (JS) divergence between the error-prone and error-free PDFs of fidelity of bipartite qubits for input states of various entanglement entropy or corresponding Schmidt coefficient parameter $\gamma$. Showcasing the variation in divergence between two distributions with changing error rates $\varepsilon$.}
\label{JSD_qubit}
\end{figure}

%\begin{widetext}
%\begin{align}
%F_{r, \gamma}=\mathscr{P}(f|r,\gamma) &= \begin{cases} 
%0 \hspace{2.885cm} \text{for} \ \ 0 \le f \le \frac{(1-\gamma r)(1-r)}{4} \\
%\frac{1}{\gamma r^2}\ln{\frac{4f}{(1-\gamma r)(1-r)}}\hspace{0.45cm} \text{for} \ \frac{(1-\gamma r)(1-r)}{4} \le f \le \frac{(1+\gamma r)(1-r)}{4}\\
%\frac{1}{\gamma r^2}\ln{\frac{1+\gamma r}{1-\gamma r}}\hspace{1.38cm} \text{for} \ \frac{(1+\gamma r)(1-r)}{4}\le f \le \frac{(1-\gamma r)(1+r)}{4} \\ \frac{1}{\gamma r^2}\ln{\frac{(1+r)(1+\gamma r)}{4f}}\hspace{0.45cm} \text{for} \ \frac{(1-\gamma r)(1+r)}{4}\le f \le \frac{(1+\gamma r)(1+r)}{4}
%\\ 0\hspace{2.885cm} \text{for\ others} %\ \frac{(1+\gamma r)(1+r)}{4} \le f \le 1
%\end{cases}
%\label{pdf_ed}
%\end{align}
%\end{widetext}

%\subsection{Benchmarking of rank-2 DQC device}

\section{Discussion}
\label{discussion}

We investigated the fidelity and entanglement properties of typical random bipartite pure states. We analyzed the fidelity of these states %concerning 
{\ab with respect to} a fixed reference quantum state and observed non-uniform distributions. We found that the distribution depends on the entanglement of the fixed reference quantum state used for quantifying fidelity. Interestingly, we discovered that the average fidelity of typical random pure bipartite qubits consistently %falls at
{\ab remains} \(\frac{1}{4}\) within a narrow entanglement range, regardless of the fixed reference quantum state. {Extending} our study to $(d\times d)$-dimensional {bipartite systems}, we found that the average fidelity of typical random {bipartite qudit pure states} within a narrow entanglement range remains constant at \(\frac{1}{d^2}\), suggesting a consistent relationship between entanglement and fidelity across different dimensions. Additionally, we demonstrate that the {\ab probability density functions} %PDFs $F_\gamma$ and $F_d$ 
for fidelity distributions can serve as references to benchmark %DQC 
{\ab distributed quantum computing} devices. Our findings provide valuable insights into the fidelity and entanglement properties of typical random bipartite pure states. These results deepen our understanding of the relationship between fidelity and entanglement, paving the way for future research and applications in quantum information theory. In conclusion, our study contributes to the ongoing exploration of random quantum states and their properties, providing a foundation for further investigations into the fundamental aspects of quantum mechanics and their practical implications in various fields.

\begin{acknowledgments}
GB, SHH, and JYW are supported by the National Science and Technology Council, Taiwan, under Grant No. NSTC 
111-2923-M-032-002-MY5, %NL-TW
111-2627-M-008-001, %NQT
112-2112-M-032-008-MY3, %QIO
112-2811-M-032-005-MY2, %PostDoc
112-2119-M-008-007. %NQT
\end{acknowledgments}

\bibliography{PhD_references}

\onecolumngrid
\appendix
\section{Fidelity distribution of random bipartite state under the Haar measure with a separable reference state}
\label{apxA}

%\subsection{}
For {\ab any} Haar uniform random pure state {\ab \(\ket{\psi}\)}, its fidelity with a fixed target state $\ket{\phi}$ is a random variable, and its probability distribution function (PDF), \textcolor{black}{denoted as ${P}_{\phi}(f)$,} is given by Eq.~(\ref{def_pdf}): %{\ab why is the symbol \(\sim\) used?}
\begin{align}
%\cancel{{P}(f|\ket{\phi})}
{\color{black}{P}_{\phi}(f)}
&=\int_{\Psi}\dd\psi\hspace{4pt}\delta\left(f- |\bra{\psi}\ket{\phi}|^2\right)
=\int_{U}\dd U \hspace{4pt}\delta\left(f- |\bra{0}U\ket{\phi}|^2\right)\label{def_pdf},
\end{align}
where U is one of a set of Haar uniform unitary operators. From the unitary invariant properties of the Haar measure, we can see that the PDF is independent of the target state \textcolor{black}{$|\phi\rangle$}. %\sout{since $V$ is an arbitrary unitary.}
\begin{align}
%\cancel{{P}(f|\ket{\phi})}
{\color{black}{P}_{\phi}(f)} = \int_{U'}\dd U' \hspace{4pt}\delta\left(f- |\bra{0}UV\ket{\phi}|^2\right) = %\cancel{{P}(f|V\ket{\phi}}
{\color{black}{P}_{V\ket{\phi}}(f)}
\end{align}
where $V$ is an arbitrary unitary and \(U'=UV\).
Therefore from now on, we take the $\ket{0}$ as our target state.

We know from reference~\cite{PhysRevA.71.032313}, that for Haar uniformly distributed random pure states of dimension \(d\), the PDF of their fidelity \textcolor{black}{$P(f)$} with respect to any fixed reference state is given by Eq. (\ref{haar_pdf}):
\begin{align}
{P}(f) = (d-1)(1-f)^{d-2}.\label{haar_pdf}
\end{align}
Since we consider $\ket{0}$ as the target state, the fidelity in this scenario is equivalent to the probability of getting outcome $\ket{0}$ from measurement, which is a random variable. %{\ab not clear}

Then we extend this problem to a special case, in which we choose a subset of Haar uniform random pure states defined by a fixed Schmidt coefficient set $\boldsymbol{s}=\{s_1,s_2,...,s_d\}$, with $s_i\geq s_j\hspace{4pt}\forall \ i\geq j$. 
\textcolor{black}{The PDF of fidelity conditional on a fixed Schmidt-coefficient vector $\boldsymbol{s}$, \textcolor{black}{denoted as $P_{\phi}(f|\boldsymbol{s})$,} can be obtained by the integral of the delta function of fidelity over the Haar measure normalized within the set $\mathbb{S}_{\boldsymbol{s}}$,
\begin{equation}
    P_{\phi}(f|\boldsymbol{s})
    =
    \frac{
        \int_{Haar} d\psi \,\delta(f-|\braket{\psi}{\phi}|^2)
        \,\delta(S(\ket{\psi})-\boldsymbol{s})
    }{
        \int_{Haar} d\psi \,\delta(S(\psi)-\boldsymbol{s})
    }
    =:
    \int_{Haar(\mathbb{S}_{\boldsymbol{s}})} d\psi \,\delta(f-|\braket{\psi}{\phi}|^2)
    ,
\end{equation}
where $S(\ket{\psi})$ is the Schmidt coefficient of the state $\ket{\psi}$ and the set of quantum states $\mathbb{S}_{\boldsymbol{s}}$ with the same Schmidt coefficient $\boldsymbol{s}$ is defined in Eq. \ref{eq::s_coeff_set}. 
}
Therefore, the conditional PDF can be written as %\sout{an} 
{\ab in} Eq. (\ref{ab_pdf}). %{\ab unitaries are denoted by \(u\) as well as \(U\), kindly check}
\begin{align}
%\cancel{{P}_{s,\ket{\phi}}(f|s)}
{\color{black}{P}_{\phi}(f|s)}
=\iint_{U_a,U_b}\dd U_a \dd U_b\hspace{4pt}\delta\left(f- |\langle{I_s}|(U_a\otimes U_b)\ket{\phi}|^2\right) \label{ab_pdf}
\end{align}
where, ${|I_s\rangle} \equiv  \sum_i s_i \ket{i_a,i_b}$ is a fixed state with the Schmidt coefficients $s$.
Since local unitary transformation does not change Schmidt coefficients of a bipartite pure state, we can see from Eq. (\ref{ab_pdf_1}) that the PDF of fidelity only depends on the Schmidt coefficients of the target state.%\sout{, since $V_a\otimes V_b$ is arbitrary local unitary.}
%\begin{align}
%\cancel{{P}_{s}(f|s,\ket{\phi}) =
%{P}_{s}(f|s) =
%\iint_{U'_a,U'_b}\dd U'_a \dd U'_b\hspace{4pt}\delta\left(f- |\langle{I_s}|(U_a\otimes U_b)(V_a\otimes V_b)\ket{\phi}|^2\right)
%= {P}_{s}(f|s,(V_a\otimes V_b)\ket{\phi}) }\label{ab_pdf_1}
%\end{align}

\begin{align}
%\cancel{{P}_{s}(f|s,\ket{\phi})}
\textcolor{black}{{P}_{\phi}(f|s)} = \iint_{U'_a,U'_b}\dd U'_a \dd U'_b\hspace{4pt}\delta\left(f- |\langle{I_s}|(U_a\otimes U_b)(V_a\otimes V_b)\ket{\phi}|^2\right)
=%\cancel{{P}_{s}(f|s,(V_a\otimes V_b)\ket{\phi})} 
\textcolor{black}{{P}_{(V_a\otimes V_b)\ket{\phi}}(f|s)} \label{ab_pdf_1}
\end{align}
where $V_a\otimes V_b$ is arbitrary local unitary. \textcolor{black}{This gives us the freedom of choice for target states up to local unitaries.}

We are dealing with a scenario where the target state is separable, which means we may choose the target state $\ket{\phi}$ to be $\ket{0,0}$. The fidelity in this scenario corresponds to the probability of obtaining outcomes $\ket{0}$,$\ket{0}$ from two local measurements, \textcolor{black}{say measurement $A$ and measurement $B$}. Consequently, using conditional probability notation, we express this as:
%\begin{align}
%\cancel{{P}_{s} = P(A=0,B=0|I_s) = P(A=0|I_s)P(B=0| A=0,I_s)}%\label{cp}.
%\end{align}
\textcolor{black}{
\begin{align}
|\braket{I_s|U_a\otimes U_b}{\phi}|^2 = p(A=0,B=0) = p(A=0)p(B=0| A=0)\label{cp}.
\end{align}
}
The probability of getting $\ket{0}$ from local measurement at A, with a given $U_a$ and $U_b$ is:
\begin{align}
p_a &\equiv %\cancel{p(A=0|I_s,U_a,U_b)}
{\color{black}p(A=0)} = \bra{0}tr_B\left((U_a\otimes U_b)|{I_s}\rangle\langle{I_s}|(U_a^\dag\otimes U_b^\dag)\right)\ket{0} = \sum_{i}s_i^2|\bra{i}U_a\ket{0}|^2\label{pa}
\end{align}
On the other hand, for given $U_a$ and $U_b$, if the measurement output at A is $\ket{0}$, then the probability of getting $\ket{0}$ from B would be:
\begin{align}
p_b \equiv %\cancel{p(B=0|I_s,A=0,U_a,U_b)}
{\color{black}p(B=0|A=0)} = |\bra{0}U_b\ket{\psi_{A=0}}|^2, \hspace{0.5cm} \text{with} \ket{\psi_{A=0}}\equiv\frac{(\bra{0}U_a\otimes\mathbb{I})|{I_s}\rangle}{\sqrt{p(A=0|I_s,U_a)}}
\end{align}
It can be verified that %\cancel{$p_{ab} = p_a p_b$}
\textcolor{black}{$|\braket{I_s|U_a\otimes U_b}{\phi}|^2 = p_a p_b$}, allowing us to reformulate Eq. (\ref{ab_pdf}) in terms of this representation as shown in Eq. (\ref{papb}):
\begin{align}
%\cancel{{P}_{s}(f|s)}
{\color{black}{P}_{\phi}(f|s)}
=&\iint_{U_a,U_b}\dd U_a \dd U_b\hspace{4pt}\delta(f - {\color{black}p_a p_b})\notag\\
=&\int_{0}^1 \dd p_b\int_{U_a}\dd U_a\hspace{4pt}\left[\int_{U_b}\dd U_b\hspace{4pt}\delta\left(p_b - |\bra{0}U_b\ket{\psi_{A=0}}|^2\right)\right]\delta(f - p_a p_b)\label{papb}
\end{align}
Subsequently, integrating over $U_b$ yields the PDF of fidelity for a pure state, leading to the expression in Eq. (\ref{def_pdf}). Consequently, the PDF of the fidelity distribution with a separable target state is given by Eq. (\ref{final}):
\begin{align}
%\cancel{{P}_{s}(f|s)} 
{\color{black}{P}_{\phi}(f|s)}
=&\int_{0}^1 \dd p_b\int_{U_a}\dd U_a\hspace{4pt}(d-1)(1-p_b)^{d-2}\delta\left(f - p_b\left(\sum_{i}s_i^2|\bra{i}U_a\ket{0}|^2\right)\right)\notag\\
=&\int_{0}^1 \dd p_b\int_{\Psi_a}\dd \psi_a\hspace{4pt}(d-1)(1-p_b)^{d-2}\delta\left(f - p_b\left(\sum_{i}s_i^2|\bra{i}\ket{\psi_a}|^2\right)\right)\label{final}
\end{align}

%\clearpage
\begin{prop}
For dimension d, if the target state is separable and the source states are \textcolor{black}{Haar random} maximally entangled \textcolor{black}{states}, then the PDF of fidelity is given by:
\begin{align}
%{P}_{d}(f) 
%\cancel{{P}(f|d)}
\textcolor{black}{P_{\text{sep}}(f|\text{max. ent.})}= d(d-1)(1-df)^{d-2}\hspace{16pt}f\in\,[0,\frac{1}{d}\,]
\end{align}
\end{prop}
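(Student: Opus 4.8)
The plan is to specialize the general conditional-PDF formula already derived in Eq.~(\ref{final}) to the maximally entangled case and then carry out the two remaining integrals. The starting point is
\begin{align}
P_{\phi}(f|s) = \int_0^1 \dd p_b \int_{\Psi_a} \dd\psi_a \, (d-1)(1-p_b)^{d-2}\,\delta\!\left(f - p_b\sum_i s_i^2\, |\langle i|\psi_a\rangle|^2\right),
\end{align}
in which the $U_b$-integration has already been folded into the known pure-state fidelity PDF $(d-1)(1-p_b)^{d-2}$ from Eq.~(\ref{haar_pdf}). The only input that distinguishes the maximally entangled source is its Schmidt spectrum, so the entire task reduces to inserting the correct Schmidt coefficients and evaluating the delta-function integrals.

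First I would set $s_i^2 = 1/d$ for all $i$, since a maximally entangled state in dimension $d$ has a flat Schmidt spectrum. The crucial observation is that the weighted sum inside the delta then collapses: $\sum_i s_i^2 |\langle i|\psi_a\rangle|^2 = \frac{1}{d}\sum_i |\langle i|\psi_a\rangle|^2 = \frac{1}{d}$ by normalization of $\ket{\psi_a}$, \emph{independently} of $\psi_a$. This is the step that trivializes the whole computation—the dependence on the local unitary $U_a$ (equivalently on $\psi_a$) drops out entirely, which is just the statement that the reduced state of a maximally entangled state is maximally mixed and hence invariant under $U_a$. With this, the $\psi_a$-integral decouples from the delta and integrates to unity against the normalized Haar measure, leaving $P_{\text{sep}}(f|\text{max. ent.}) = \int_0^1 \dd p_b\,(d-1)(1-p_b)^{d-2}\,\delta(f - p_b/d)$.

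I would then evaluate the remaining one-dimensional delta-function integral by the substitution $p_b = df$. Rescaling the delta $\delta(f-p_b/d)$ contributes a Jacobian factor $|\dd(f-p_b/d)/\dd p_b|^{-1} = d$, which produces the overall prefactor, while the integration range $0 \le p_b \le 1$ translates into the support $0 \le f \le 1/d$. This yields $P_{\text{sep}}(f|\text{max. ent.}) = d(d-1)(1-df)^{d-2}$ on $[0,1/d]$, exactly as claimed.

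Honestly, there is no serious obstacle here once Eq.~(\ref{final}) is in hand: the proof is a one-line simplification of the Schmidt-weighted sum followed by a change of variables in a delta function. If I had to name the only point requiring care, it is the bookkeeping of the Jacobian factor $d$ from rescaling the delta and the correct propagation of the endpoint $p_b=1$ to the cutoff $f=1/d$; a useful consistency check on these two details is that the result must be normalized, $\int_0^{1/d} d(d-1)(1-df)^{d-2}\,\dd f = 1$, which I would verify at the end.
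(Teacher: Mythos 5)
Your proposal is correct and follows essentially the same route as the paper's own proof: specializing Eq.~(\ref{final}) to the flat Schmidt spectrum $s_i^2 = 1/d$, using normalization of $\ket{\psi_a}$ to collapse the weighted sum and decouple the $\psi_a$-integral, and then evaluating the remaining delta-function integral with the Jacobian factor $d$ to obtain $d(d-1)(1-df)^{d-2}$ on $[0,1/d]$. The only addition beyond the paper's argument is your final normalization check, which is a sensible but inessential verification.
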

\begin{proof}
Since the source state is maximumly tangled, i.e. $s_i^2 = \frac{1}{d}$ for all i, then by the Eq. (\ref{final}).
\begin{align}
%{P}_{d}(f) 
%\cancel{{P}(f|d)}
\textcolor{black}{P_{\text{sep}}(f|\text{max. ent.})}&= \int_{0}^1 \dd p_b\int_{\Psi_a}\dd \psi_a\hspace{4pt}(d-1)(1-p_b)^{d-2}\delta\left(f - \frac{p_b}{d}\left(\sum_{i}|\bra{i}\ket{\psi_a}|^2\right)\right)\notag\\
&=\int_{0}^1 \dd p_b(d-1)(1-p_b)^{d-2}\delta\left(f - \frac{p_b}{d}\right)
= d(d-1)(1-df)^{d-2}\hspace{16pt}f\in\,\left[0,\frac{1}{d}\,\right]
\end{align}
\end{proof}

\begin{prop}
For qubits (d=2), let \textcolor{black}{the target state be separable, and} the Schmidt coefficient of random sources states \textcolor{black}{be} $\left(\sqrt{\frac{1+\gamma}{2}},\sqrt{\frac{1-\gamma}{2}}\right)$, with a parameter $\gamma\in\,[0,1\,]$, then the PDF of fidelity conditional on $\gamma$ is given by Eq. (\ref{2_pdf}).
\begin{align}
%{P}_\gamma(f) 
%\cancel{{P}(f|\gamma)}
\textcolor{black}{P_{\text{sep}}(f|\gamma)}&=  \begin{cases} \frac{1}{\gamma}\ln{\frac{1+\gamma}{2f}}\hspace{8pt}f\in\left[\frac{1-\gamma}{2},\frac{1+\gamma}{2}\right]\\
\frac{1}{\gamma}\ln{\frac{1+\gamma}{1-\gamma}}\hspace{8pt}f\in\left[0,\frac{1-\gamma}{2}\right]\end{cases}\label{2_pdf}
\end{align}
\end{prop}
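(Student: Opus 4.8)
The plan is to specialize the master formula~(\ref{final}) to the case $d=2$ and then reduce the two remaining integrations to elementary ones. At $d=2$ the prefactor $(d-1)(1-p_b)^{d-2}$ collapses to $1$, so the outer variable $p_b$ is simply uniform on $[0,1]$. The inner Haar integral over the single-qubit state $\psi_a$ can be traded for an integral over the scalar $x:=|\langle 1|\psi_a\rangle|^2$; applying~(\ref{haar_pdf}) at $d=2$ shows that $x$ is uniformly distributed on $[0,1]$. With the Schmidt coefficients $s_1^2=\tfrac{1+\gamma}{2}$ and $s_2^2=\tfrac{1-\gamma}{2}$, the weighted marginal appearing inside the delta function is $p_a=s_1^2 x+s_2^2(1-x)=\tfrac{1-\gamma}{2}+\gamma x$.

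First I would change variables from $x$ to $p_a$. Since $x\mapsto p_a$ is affine with slope $\gamma$ and $x$ is uniform on $[0,1]$, the variable $p_a$ is uniform on $\left[\tfrac{1-\gamma}{2},\tfrac{1+\gamma}{2}\right]$ with constant density $1/\gamma$. Formula~(\ref{final}) then becomes the double integral
\begin{equation}
\textcolor{black}{P_{\text{sep}}(f|\gamma)}=\frac{1}{\gamma}\int_{0}^{1}\dd p_b\int_{\frac{1-\gamma}{2}}^{\frac{1+\gamma}{2}}\dd p_a\;\delta(f-p_a p_b),
\end{equation}
that is, the fidelity $f=p_a p_b$ is a product of two independent, explicitly known random variables.

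Next I would carry out the $p_b$ integration using the delta function, which fixes $p_b=f/p_a$ and produces a factor $1/p_a$ valid precisely when $0\le f/p_a\le 1$, i.e.\ when $p_a\ge f$ (a Heaviside constraint $\Theta(p_a-f)$). The surviving integral $\tfrac{1}{\gamma}\int \dd p_a/p_a$ over the admissible range is a logarithm, and the result splits into three cases according to where $f$ sits relative to the endpoints $\tfrac{1-\gamma}{2}$ and $\tfrac{1+\gamma}{2}$: if $f\le\tfrac{1-\gamma}{2}$ the whole interval contributes and one obtains $\tfrac{1}{\gamma}\ln\tfrac{1+\gamma}{1-\gamma}$; if $\tfrac{1-\gamma}{2}\le f\le\tfrac{1+\gamma}{2}$ the lower limit is raised to $f$, giving $\tfrac{1}{\gamma}\ln\tfrac{1+\gamma}{2f}$; and if $f\ge\tfrac{1+\gamma}{2}$ the support is empty and the density vanishes. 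This reproduces~(\ref{2_pdf}).

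The main obstacle is not any single integral but the bookkeeping of supports: one must correctly identify that $x$ (and hence $p_a$) is uniform, and then track the constraint $p_a\ge f$ through the $p_a$ integration so that the three case boundaries emerge in the right places. A secondary point worth checking is the Schmidt ordering convention, ensuring $s_1^2=\tfrac{1+\gamma}{2}$ so that $p_a$ indeed ranges over $\left[\tfrac{1-\gamma}{2},\tfrac{1+\gamma}{2}\right]$ and the interval widens as $\gamma$ grows.
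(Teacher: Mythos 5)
Your proposal is correct and follows essentially the same route as the paper's own proof: specialize Eq.~(\ref{final}) to $d=2$, use that $p_b$ and $p_a=|\langle 0|\psi_a\rangle|^2$ are each uniform on $[0,1]$, and evaluate the resulting double integral of $\delta\bigl(f-p_b(\gamma p_a+\tfrac{1-\gamma}{2})\bigr)$. The only difference is presentational — you make the affine change of variables explicit and carry out the delta-function integration with its support bookkeeping, steps the paper leaves implicit when passing from its double integral to the cased logarithmic result.
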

\begin{proof}
\begin{align}
%{P}_\gamma(f) 
%\cancel{{P}(f|\gamma)}
\textcolor{black}{P_{\text{sep}}(f|\gamma)}=&\int_{0}^1 \dd p_b\int_{\Psi_a}\dd\psi_a\hspace{4pt}\delta\left(f - p_b\left(\frac{1+\gamma}{2}|\bra{0}\ket{\psi_a}|^2 + \frac{1-\gamma}{2}\left(1-|\bra{0}\ket{\psi_a}|^2\right)\right)\right)\notag\\
=&\int_{0}^1 \dd p_b \int_{0}^1 \dd p_a\hspace{4pt}\delta\left(f - p_b\left(\gamma p_a+\frac{1-\gamma}{2}\right)\right)
%=\frac{1}{\gamma}\int_{0}^1dp_b\int_{\frac{1-\gamma}{2}p_b}^{\frac{1+\gamma}{2}p_b}dz\frac{1}{p_b}\delta(f-z)\notag\\
%=&\frac{1}{\gamma}\int_{\frac{1-\gamma}{2}}^{\frac{1+\gamma}{2}}dz\int_{\frac{2z}{1+\gamma}}^{1}dp_b\frac{1}{p_b}\delta(f-z) + \int_{0}^{\frac{1-\gamma}{2}}dz\int_{\frac{2z}{1+\gamma}}^{\frac{2z}{1-\gamma}}dp_b\frac{1}{p_b}\delta(f-z)\notag\\
= \begin{cases} \frac{1}{\gamma}\ln{\frac{1+\gamma}{2f}},\hspace{8pt}f\in\left[\frac{1-\gamma}{2},\frac{1+\gamma}{2}\right]\\
\frac{1}{\gamma}\ln{\frac{1+\gamma}{1-\gamma}},\hspace{8pt}f\in\left[0,\frac{1-\gamma}{2}\right]\end{cases}
\end{align}
\end{proof}

\end{document}